\documentclass[runningheads]{llncs}
\usepackage[T1]{fontenc}
\usepackage{graphicx}

\newcommand{\PFP}{\ensuremath{\mathrm{PFP}}}
\newcommand{\rev}{\ensuremath{\mathrm{rev}}}

\begin{document}

\title{Parse indexing for discarding\\short pseudo-MEMs safely}
\author{Travis Gagie\inst{1}\orcidID{0000-0003-3689-327X}}
\authorrunning{T. Gagie}
\institute{Faculty of Computer Science, Dalhousie University, Halifax, Canada \email{travis.gagie@gmail.com}}
\maketitle

\begin{abstract}
\noindent
Brown et al.\ (2025) described a pre-processing step, called $k$-mer based breaking (KeBaB), that speeds up searching for long maximal exact matches (MEMs) between a pattern $P$ and an indexed repetitive text $T$.  KeBaB produces a set of substrings of $P$ called pseudo-MEMs that often have total length much less than $|P|$ but are still guaranteed to contain all the MEMs of length at least a fixed parameter $k$.  Brown et al.\ found that KeBaB can be particularly effective when we discard all but the longest pseudo-MEMs --- but then we risk also discarding the longest MEMs!  In this paper we show how we can use parse indexing to generate pseudo-MEMs together with lower bounds on the lengths of the longest MEMs they must contain, allowing us to discard short pseudo-MEMs safely.

\keywords{Maximal exact matches (MEMs) \and Pseudo-MEMs \and $k$-mer based breaking (KeBaB) \and Parse indexing.}
\end{abstract}

\section{Introduction}
\label{sec:introduction}

A {\em maximal exact match} (MEM) of a pattern $P [1..m]$ with respect to a text $T [1..n]$ is a substring $P [i..j]$ of $P$ that occurs in $T$ but such that $i = 1$ or $P [i - 1..j]$ does not occur in $T$, and $j = m$ or $P [i..j + 1]$ does not occur in $T$.  Finding MEMs between patterns and indexed texts is a standard task in bioinformatics (see, e.g.,~\cite{Li13,Li24}) for which the most popular solution is Li's~\cite{Li12} forward-backward algorithm, which usually uses one FM-index~\cite{FM05} for $T$ and another for its reverse $T^\rev$ and takes a number of backward steps in those indexes proportional to the total length of the MEMs.  The FM-indexes can be based on the Burrows-Wheeler Transforms (BWT) of $T$ and $T^\rev$ or run-length compressed BWTs of them, and need not store a suffix-array sample.  If $T$ is over a large alphabet then it is better in practice to use compressed suffix arrays (CSA)~\cite{GV05} or run-length compressed CSAs (RLCSAs)~\cite{MNSV10} rather than FM-indexes, since they have better locality; see~\cite{BGMNS25,CGN26} for recent discussions.

Gagie~\cite{Gag24} proposed a modification of forward-backward, now called Boyer-Moore-Li (BML) because of its similarity to Boyer-Moore pattern matching, that Li~\cite{Li24} included in {\tt ropebwt3}.  BML takes a threshold $L$ and searches only for MEMs of length at least $L$, generally using significantly less time than forward-backward uses to find all the MEMs.  BML can instead take a parameter $t$ and find the $t$ longest MEMs by starting with $L = 1$ and, whenever it has found at least $t$ MEMs of length $L$, resetting $L$ to 1 plus the length of the $t$th longest MEM it has found.  Both forward-backward and BML can take a frequency $f$ and find maximal substrings of the pattern that exactly match at least $f$ substrings of the text ($f$-MEMs, with regular MEMs being 1-MEMs), by stopping their searches in the FM-indexes when their BWT intervals become of length less than $f$ rather than empty (or the equivalent in the CSAs or RLCSAs if we are using those instead).

Brown et al.~\cite{BDZA+25} recently described a preprocessing step, called $k$-mer based breaking (KeBaB), to speed up MEM-finding even further.  They fix a parameter $k$ and build a Bloom filter for the distinct $k$-mers in $T$.  When given $P$, they quickly separate the $k$-mers in $P$ into those that probably occur in $T$ and those that certainly do not.  They call the maximal substrings of the pattern consisting only of the former $k$-mers {\em pseudo-MEMs}.  Since Bloom filters can return false positives but not false negatives, these pseudo-MEMs are guaranteed to contain all the MEMs of length at least $k$ of $P$ with respect to $T$.  Although they can overlap by $k - 2$ characters, their total length is often much smaller than $m$.  Brown et al.\ found that it is usually much faster to find the pseudo-MEMs and then find the MEMs in them than to find the MEMs in the pattern directly.  When $T$ is repetitive, such as a database of genomes, the Bloom filter is usually much smaller than forward-backward's or BML's indexes, so their pre-processing step also uses much less memory than the actual MEM-finding step.  We can also adapt KeBaB for finding $f$-MEMs by storing the $k$-mers in $T$ in a counting Bloom filter~\cite{FCAB00} (see also~\cite{PFDZ24}) and then, when given $P$ and $f$, finding the maximal substrings of $P$ consisting only of $k$-mers the Bloom filter says each occur at least $f$ times in $T$.

Brown et al.\ tried discarding all the pseudo-MEMs shorter than $L = 25 > k = 20$ characters when using KeBaB in a pipeline for metagenomic classification, and the pipeline became much faster and even more accurate than when using all the MEMs.  They then tried keeping only the $t = 10$ longest pseudo-MEMs and the results were even better.  Those improvements are not guaranteed, however, to carry over to other datasets or other pipelines!  The risk is that the pseudo-MEMs we keep may contain only shorter MEMs than the ones in the pseudo-MEMs we discard and that could affect the quality of the pipeline's final results.  This is because KeBaB can produce long pseudo-MEMs that do not contain long MEMs, when the Bloom filter returns false positive for a $k$-mer somewhere in the middle of the pseudo-MEMs or all the $k$-mers in the pseudo-MEM occur in $T$ but not consecutively.

This risk could be unacceptable for critical medical pipelines, for example.  To see why it is also a concern in algorithms research, consider that a longest MEM of $P$ with respect to $T$ is a longest common substring (LCS) of $P$ and $T$, and an algorithms researcher is unlikely to embrace a technique that seems to speed up LCS-finding in practice but has no real assurance of correctness.  Before we discard a pseudo-MEM, therefore, we should be confident that at least some of the ones we are keeping contain longer MEMs.  For example, if we want only the $t$ longest MEMs and we can compute a lower bound on the length of the longest MEM in each pseudo-MEM, then once we have found $t$ pseudo-MEMs guaranteed to contain MEMs of length at least $\ell$, then we can discard any pseudo-MEM of length less than $\ell$.

In this paper we show how to use parse indexing to generate pseudo-MEMs together with such lower bounds.  In Section~\ref{sec:parse_indexing} we review parse indexing, prove a key lemma about the prefix-free parses of $P$ and $T$, and redefine pseudo-MEMs in terms of those parses.  In Section~\ref{sec:new_pseudo-MEMs} we describe some useful properties of these new pseudo-MEMs, including how to compute their lower bounds.  In Section~\ref{sec:future_work} we close by discussing some directions for future work.

\section{Parse indexing}
\label{sec:parse_indexing}

Deng et al.~\cite{DHKS22} introduced parse indexing with LMS grammars but, like Hong et al.~\cite{HOKB+24}, we find it easier to consider prefix-free parsing (PFP).  To apply PFP with parameters $w$ and $p$ to a text $T [1..n]$, we choose a Karp-Rabin hash function and then run a sliding window of width $w$ over $T$ and insert a phrase break whenever the hash of the window's contents is congruent to 0 modulo $p$.  This is similar to the popular Unix tool {\tt rsync}~\cite{TM96} but after we insert each phrase break we start the next phrase with the current contents of the sliding window, so consecutive phrases overlap by $w$ characters.  This produces a parse $\PFP (T)$ that is usually much shorter than $T$, albeit over the much larger alphabet of phrases, and if $T$ is repetitive then usually $\PFP (T)$ is repetitive as well~\cite{FOGB24}.  If we are later given a pattern $P [1..m]$ and we apply PFP to it with the same parameters $w$ and $p$ and the same Karp-Rabin hash function, there is a useful relationship between $\PFP (P)$ and $\PFP (T)$:

\begin{lemma}
\label{lem:consistency}
$\PFP (P) [i..j]$ occurs at least $f$ times in $\PFP (T)$ if and only if the substring of $P$ that becomes $\PFP (P) [i..j]$ is contained in a $f$-MEM of $P$ with respect to $T$. 
\end{lemma}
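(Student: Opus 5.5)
The plan is to reduce the statement to a purely local fact about where phrase breaks fall, and then check that occurrences correspond one-to-one in both directions. First I will rewrite the right-hand side. A substring $S$ of $P$ is contained in an $f$-MEM of $P$ with respect to $T$ if and only if $S$ itself occurs at least $f$ times in $T$. One direction holds because every substring of a string with at least $f$ occurrences has at least $f$ occurrences. For the other, extend $S$ to the left and then to the right for as long as the frequency stays at least $f$; the result is an $f$-MEM containing $S$. So, writing $S = P[a..b]$ for the substring that becomes $\PFP(P)[i..j]$, the lemma reduces to this claim: the phrase sequence $\PFP(P)[i..j]$ occurs at least $f$ times in $\PFP(T)$ if and only if $S$ occurs at least $f$ times in $T$.

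The key observation is that phrase breaks are determined locally. A break is inserted exactly where the current length-$w$ window hashes to $0$ modulo $p$, and the hash depends only on the window's contents. Suppose phrase $i$ is not the first phrase of $P$ and phrase $j$ is not the last. Then $S$ begins with a length-$w$ trigger window and ends with a length-$w$ trigger window, because consecutive phrases overlap by exactly those windows. Its interior trigger windows are exactly the windows of $S$ that hash to $0$. To handle the first and last phrases uniformly, I will adopt the standard PFP convention of padding both $P$ and $T$ with $w$ copies of a special delimiter at each end, so that every phrase is delimited by trigger windows. I expect this boundary convention to be the main obstacle. Without it the ``if'' direction can fail for the first or last phrase of $P$: an occurrence of a prefix of $P$ in $T$ need not start at a phrase boundary of $T$. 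So I will state the padding assumption explicitly, or else restrict the statement to internal phrases.

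For the ``if'' direction, take any occurrence $T[c..c+b-a]$ of $S$. Its first and last $w$ characters are trigger windows in $T$, so $\PFP(T)$ has a phrase starting at $c$ and a phrase ending at $c+b-a$. Its interior windows hash exactly as they do in $P$, so $\PFP(T)$ breaks this occurrence into exactly the phrases of $\PFP(P)[i..j]$, in the same order. For the ``only if'' direction, any occurrence of $\PFP(P)[i..j]$ in $\PFP(T)$ spells out, after merging the $w$-character overlaps, a substring of $T$ equal to $S$.

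Finally, I will check that these two maps between occurrences are inverse bijections, so that counts of at least $f$ transfer in both directions. Distinct starting positions in $T$ give distinct phrase starting indices, since each occurrence of $S$ begins at a phrase boundary of $T$. Conversely, distinct phrase indices give distinct text positions, since phrases have length greater than $w$ and so phrase starts are strictly increasing.
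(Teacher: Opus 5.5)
Your proposal is correct and follows essentially the same route as the paper: both reduce ``contained in an $f$-MEM'' to ``occurs at least $f$ times in $T$'' and then use the fact that phrase breaks depend only on local window contents, so occurrences of the substring in $T$ and occurrences of the phrase sequence in $\PFP(T)$ correspond. Your argument is more careful than the paper's two-line proof, and your boundary caveat is justified: without padding, the direction from an $f$-MEM to occurrences in $\PFP(T)$ really does fail when $i = 1$ or $j = |\PFP(P)|$, because the first phrase of $P$ can occur in $T$ at a position that is not a phrase start, and the paper's proof passes over this case.
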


\begin{proof}
If the substring of $P$ that becomes $\PFP (P) [i..j]$ is contained in a $f$-MEM of $P$ with respect to $T$, then that substring occurs at least $f$ times in $T$ and our parsing algorithm produces the sequence $\PFP (P) [i..j]$ of phrases while parsing $T$ whenever it encounters that substring.  If $\PFP (P) [i..j]$ occurs at least $f$ times in $\PFP (T)$ then the substring of $P$ that becomes $\PFP (P) [i..j]$ occurs at least $f$ times in $T$, and it follows that substring is contained in an $f$-MEM of $P$ with respect to $T$.
\end{proof}

\noindent
Lemma~\ref{lem:consistency} gives some intuition into how finding MEMs between the parses can help us find MEMs of $P$ with respect to $T$, but the details still require some care.  Specifically, for this section we redefine {\em pseudo-MEMs} to be the substrings of $P$ that become the sequences of phrases in the union of two sets:
\begin{eqnarray*}
S_1 & = & \left\{ \begin{array}{l}
	\PFP (P) [\max (i - 1, 1)..\min (j + 1, |\PFP (P)|)]\ :\\[1ex]
	\hspace{3ex} \mbox{$\PFP (P) [i..j]$ is an $f$-MEM of $\PFP (P)$ with respect to $\PFP (T)$}
	\end{array} \right\} \\[2ex]
S_2 & = & \left\{ \begin{array}{l}
	\PFP (P) [i..i + 1]\ :\\[1ex]
	\hspace{3ex} \mbox{neither $\PFP (P) [i]$ nor $\PFP (P) [i + 1]$ occur in $\PFP (T)$}
	\end{array} \right\}\,.
\end{eqnarray*}

\noindent
Our idea is to extend $f$-MEMs of $\PFP (P)$ with respect to $\PFP (T)$ by one phrase in both directions (if possible), and $S_1$ and $S_2$ are what we get when considering the $f$-MEMs that are non-empty and empty, respectively.  We try to extend by one phrase in each direction because the substring that becomes an $f$-MEM of $\PFP (P)$ with respect to $\PFP (T)$ can be contained in an $f$-MEM of $P$ with respect to $T$ that starts in the preceding phrase and ends in the following phrase, but not in one that starts more than one phrase earlier or ends more than one phrase later.

For simplicity we ignore here the uninteresting case when $\PFP (P)$ is a single phrase, but then all of $P$ is a single pseudo-MEM.  Note that if $\PFP (P)$ contains at least 2 phrases then even if a phrase is at an end of $P$, if it occurs in $T$ then it is part of a pseudo-MEM from $S_1$; if it does not then either its neighbour does, in which case the phrase is part of a pseudo-MEM from $S_1$, or it does not, in which case the phrase is part of a pseudo-MEM from $S_2$.

If we have $\PFP (T)$ and $(\PFP (T))^\rev$ indexed appropriately then, when given $P$ and $f$, we can find the $f$-MEMs of $\PFP (P)$ with respect to $\PFP (T)$ using forward-backward, which takes a number of backward steps proportional to the total number of phrases in those $f$-MEMs --- which should be much smaller than the total length of the $f$-MEMs of $P$ with respect to $T$ --- and then find the pseudo-MEMs from them.

\section{New pseudo-MEMs}
\label{sec:new_pseudo-MEMs}

Our redefinition of pseudo-MEMs is motivated by their following three properties:
\begin{enumerate}
\item The pseudo-MEMs contain all the $f$-MEMs of $P$ with respect to $T$.
\item If we delete 2 phrases from one end of any pseudo-MEM and 1 phrase from the other end, then some $f$-MEM is no longer contained in the pseudo-MEMs.
\item A pseudo-MEM cannot contain an $f$-MEM longer than itself, obviously, but it is guaranteed to contain an $f$-MEM as long as the substring that results from deleting one phrase from either end of the pseudo-MEM (although that substring is empty if the pseudo-MEM is from $S_2$, or from a pair of phrases in $S_1$ at an end of $\PFP (P)$).
\end{enumerate}
The first property also holds for the pseudo-MEMs we would choose with KeBaB when finding $f$-MEMs, but the second and third properties do not.

The first property means we can restrict our attention to the pseudo-MEMs when looking for $f$-MEMs, so they are a complete representation of $P$ for this purpose.  To see why it holds, consider an $f$-MEM $P [i..j]$ of $P$ with respect to $T$ and the longest sequence of complete phrases within it.  We consider separately the cases when that sequence is empty and when it is not.  First suppose the sequence is not empty so, by Lemma~\ref{lem:consistency}, the corresponding substring of $\PFP (P)$ occurs at least $f$ times in $\PFP (T)$ and is thus contained within an $f$-MEM of $\PFP (P)$ with respect to $\PFP (T)$.  Consider the corresponding sequence of phrases in $S_1$ and the pseudo-MEM whose parse is that sequence of phrases, which contains $P [i..j]$.

Now suppose $P [i..j]$ contains no complete phrases.  To deal with this case we should consider six subcases: when $P [i..j]$ is contained in a single phrase that occurs at least $f$ times in $\PFP (T)$; when $P [i..j]$ is contained in a single phrase that does not occur at least $f$ times in $\PFP (T)$; when $P [i..j]$ is split across two phrases, which occur together at least $f$ times in $\PFP (T)$; when $P [i..j]$ is split across two phrases, both of which occur at least $f$ times in $\PFP (T)$ but not together; when $P [i..j]$ is split across two phrases, exactly one of which occurs at least $f$ times in $\PFP (T)$; and when $P [i..j]$ is split across two phrases, neither of which occur $f$ times in $\PFP (T)$.  It is straightforward but tedious to check that in each subcase $P [i..j]$ is contained in a pseudo-MEM.  For example, when $P [i..j]$ is split across two phrases, exactly one of which occurs at least $f$ times in $\PFP (T)$, then that phrase is at the end of an $f$-MEM of $\PFP (P)$ with respect to $\PFP (T)$, and the pseudo-MEM for the corresponding sequence of phrases in $S_1$ contains $P [i..j]$.

The second property means our choice of pseudo-MEMs is somewhat parsimonious, in that we cannot shrink them very much without losing the first property.  To see why it holds, consider that we cannot delete 2 phrases from one end of a pseudo-MEM and 1 from the other end when the pseudo-MEM's parse consists of only 2 phrases, so the only pseudo-MEMs we need consider are those whose parses are in $S_1$.  Those pseudo-MEMs are obtained by adding at most one phrase to either end of an $f$-MEM of $\PFP (P)$ with respect to $\PFP (T)$, so if we delete 2 phrases from one end of one of those pseudo-MEMs and 1 from the other end then we obtain a proper substring of an $f$-MEM of $P$ with respect to $T$.

The third property gives us bounds on the longest $f$-MEM each pseudo-MEM contains.  That means that if we are interested only in the longest $t$ $f$-MEMs and we can find $t$ pseudo-MEMs guaranteed to contain $f$-MEMs of length at least $\ell$, then we can discard any pseudo-MEM of length less than $\ell$.  In other words, we can safely discard short pseudo-MEMs without worrying about accidentally discarding the longest $f$-MEMs.  To see why the third property holds, consider that if we can delete a phrase from each end of a pseudo-MEM and obtain a non-empty substring of $P$, then the pseudo-MEM's parse is in $S_1$ and so its parse occurs at least $f$ times in $\PFP (T)$ and so the substring itself occurs at least $f$ times in $T$.

\section{Future work}
\label{sec:future_work}

We realized recently that we should be able to combine KeBaB and parse indexing to obtain the advantages of both.  Suppose we build a Bloom filter for the distinct phrases in $\PFP (T)$, rather than the distinct $k$-mers, or a counting Bloom filter for the distinct phrases if we want to find $f$-MEMs.  Given $P$ and $f$, we use the Bloom filter to find the set $S_3$ containing the substrings $\PFP (P) [\max (i - 1, 1)..\min (j + 1, |\PFP (P)|)]$ such that $\PFP (P) [i..j]$ is a maximal substring of $\PFP (P)$ consisting only of phrases the Bloom filter says each occur at least $f$ times in $\PFP (T)$, and the set $S_4$ containing the substrings of $\PFP (P)$ consisting of 2 phrases neither of which the Bloom filter says occur at least $f$ times in $\PFP (T)$.  All the substrings of $\PFP (P)$ in $S_1$ are contained in substrings in $S_3$, and all the substrings in $S_2$ are contained in substrings in $S_4$, so we can use parse indexing to find $S_1$ and $S_2$ from $S_3$ and $S_4$.

We also realized that we can obtain better theoretical results if we use a parsing based on minimizers~\cite{RHHMY04} rather than PFP.  To find the minimizers in a string we choose parameters $k$ and $w > k$ and run a sliding window of length $w$ over the string, always keeping track of which $k$-mers in the window have the smallest value (either their hash values or lexicographic value); these $k$-mers are the {\em minimizers} of the string.  If we take the end of each minimizer as marking a boundary between phrases, then we obtain a parsing algorithm such that
\begin{itemize}
\item for any substring of length $\ell > 2 w - 2$ the parsing of the central $\ell - 2 w + 2$ characters is always the same;
\item no phrase is longer than $w - k + 1$.
\end{itemize}
(Our parse is not the same as the minimizer digest of the string, since we consider the entire phrases and not just the minimizers.)  The first property allows us to find pseudo-MEMs as we do with the PFP parse and the second property guarantees that --- unlike with PFP --- we cannot have very long pseudo-MEMs consisting of only a few phrases.  If we are interested only in MEMs of length greater than $2 w - 2 k + 2$, for example, then we need not consider the analogue of the $S_2$, for example.

The optimizations can easily be combined and we will investigate them both in a future paper.

\begin{credits}

\subsubsection{\ackname} This research was funded by NSERC Discovery Grant RGPIN-07185-2020.  Many thanks to Christina Boucher, Nate Brown, Eddie Ferro and Ben Langmead for helpful discussions.

\subsubsection{\discintname} The author has no competing interests to declare that are relevant to the contents of this article.

\end{credits}

% \bibliographystyle{splncs04}
% \bibliography{discarding}

\end{document}